\title{Using cascading Bloom filters to improve the memory usage for de Brujin graphs}
\author{
Kamil Salikhov\inst{1}
\and
Gustavo Sacomoto\inst{2,3}
\and
Gregory Kucherov\inst{4,5}
}
\institute{
Lomonosov Moscow State University, Moscow, Russia, \email{salikhov.kamil@gmail.com}
\and
INRIA Grenoble Rh\^one-Alpes, France, \email{gustavo.sacomoto@inria.fr}
\and
Laboratoire Biom\'etrie et Biologie Evolutive, Universit\'e Lyon 1, Lyon, France
\and
Department of Computer Science, Ben-Gurion University of the Negev, Be'er Sheva, Israel
\and
Laboratoire d'Informatique Gaspard Monge, Universit\'e Paris-Est \& CNRS, Marne-la-Vall\'ee, Paris, France, \email{Gregory.Kucherov@univ-mlv.fr}
}
\begin{document}

\maketitle

\begin{abstract}
De Brujin graphs are widely used in bioinformatics for processing
next-generation sequencing data. Due to a very large size of NGS
datasets, it is essential to represent de Bruijn graphs compactly, and
several approaches to this problem have been proposed recently.  In
this work, we show how to reduce the memory required by the algorithm
of \cite{DBLP:conf/wabi/ChikhiR12} that represents de Brujin graphs
using Bloom filters. Our method requires 30\% to 40\% less memory with
respect to the method of \cite{DBLP:conf/wabi/ChikhiR12}, with
insignificant impact to construction time. At the same time, our
experiments showed a better query time compared to
\cite{DBLP:conf/wabi/ChikhiR12}. This is, to our
knowledge, the best \emph{practical} representation for de Bruijn
graphs.
\end{abstract}

\section{Introduction}
Modern Next-Generation Sequencing (NGS) technologies generate huge
volumes of short nucleotide sequences (\emph{reads}) drawn from the
DNA sample under study. The length of a read varies from 35 to about
400 base pairs (letters) and the number of reads may be hundreds of
millions, thus the total volume of data may reach tens or even
hundreds of Gb.

Many computational tools dealing with NGS data, especially those
devoted to {\em genome assembly}, are based on the concept of \emph{de
  Bruijn graph}, see e.g. \cite{pmid20211242}. The nodes of the de
Bruijn graph are all distinct {\em $k$-mers} occurring in the reads,
and two {\em $k$-mers} are linked by an edge if there is a
suffix-prefix overlap of size $k-1$.\footnote{Note that this actually
  a {\em subgraph} of the de Bruijn graph under its classical
  combinatorial definition. However, we still call it de Bruijn graph
  to follow the terminology common to the bioinformatics literature.}
The value of $k$ is an open parameter, that in practice is chosen
between $20$ and $64$. The idea of using de Bruijn graph for genome
assembly goes back to the ``pre-NGS era'' \cite{pmid11504945}.  Note,
however, that {\em de novo} genome assembly is not the only
application of those graphs when dealing with NGS data. There are
several others, including: {\em de novo} transcriptome
assembly~\cite{trinity} and {\em de novo} alternative splicing
calling~\cite{kissplice} from transcriptomic NGS data (RNA-seq);
metagenome assembly~\cite{meta-idba} from metagenomic NGS data; and
genomic variant detection~\cite{cortex} from genomic NGS data using a
reference genome.

Due to the very large size of NGS datasets, it is essential to
represent de Bruijn graphs as compactly as possible.  This has been a
very active line of research. Recently, several papers have been
published that propose different approaches to compressing de Bruijn
graphs
\cite{Conway15022011,YeAtAl-BMCBioinfo12,DBLP:conf/wabi/ChikhiR12,DBLP:conf/wabi/BoweOSS12,pmid22847406}.

Conway and Bromage~\cite{Conway15022011} proposed a method based on
classical succinct data structures, i.e. bitmaps with efficient
rank/select operations.  On the same direction, Bowe \emph{et
  al.}~\cite{DBLP:conf/wabi/BoweOSS12} proposed a very interesting
succinct representation that, assuming only one string (read) is
present, uses only $4m$ bits, where $m$ is the number of edges in the
graph. The more realistic case, where there are $M$ reads, can be
easily reduced to the one string case by concatenating all $M$ reads
using a special separator character. However, in this case the size of
the structure is $4m + O(M \log m)$ bits
(\cite{DBLP:conf/wabi/BoweOSS12}, Theorem 1). Since the multiplicative
constant of the second term is hidden by the asymptotic notation, it
is hard to know precisely what would be the size of this structure in
practice.

Ye at al.~\cite{YeAtAl-BMCBioinfo12} proposed a different method based
on a sparse representation of de Bruijn graphs, where only a subset of
$k$-mers present in the dataset are stored. Pell et
al.~\cite{pmid22847406} proposed a method to represent it
approximately, the so called \emph{probabilistic de Bruijn graph}. In
their representation a node have a small probability to be a false
positive, i.e. the $k$-mer is not present in the dataset. Finally,
Chikhi and Rizk~\cite{DBLP:conf/wabi/ChikhiR12} improved Pell's scheme
in order to obtain an exact representation of the de Bruijn
graph. This was, to our knowledge, the best \emph{practical}
representation of an exact de Bruijn graph.

In this work, we focus on the method proposed in
\cite{DBLP:conf/wabi/ChikhiR12} which is based on Bloom filters.  They
were first used in \cite{pmid22847406} to provide a very
space-efficient representation of a subset of a given set (in our
case, a subset of $k$-mers), at the price of allowing {\em one-sided
  errors}, namely {\em false positives}. The method of
\cite{DBLP:conf/wabi/ChikhiR12} is based on the following idea: if all
queried nodes ($k$-mers) are only those which are reachable from some
node known to belong to the graph, then only a fraction of all false
positives can actually occur. Storing these false positives explicitly
leads to an exact (false positive free) and space-efficient
representation of the de Bruijn graph.

Our contribution is an improvement of this scheme by changing the
representation of the set of false positives. We achieve this by
iteratively applying a Bloom filter to represent the set of false
positives, then the set of ``false false positives'' etc. We show
analytically that this cascade of Bloom filters allows for a
considerable further economy of memory, improving the method of
\cite{DBLP:conf/wabi/ChikhiR12}. Depending on the value of $k$, our
method requires 30\% to 40\% less memory with respect to the method of
\cite{DBLP:conf/wabi/ChikhiR12}. Moreover, with our method, the memory
grows very little with the growth of $k$. Finally, we implemented our
method and tested it against \cite{DBLP:conf/wabi/ChikhiR12} on real
datasets. The tests confirm the theoretical predictions for the size
of structure and show a 20\% to 30\% \emph{improvement} in query
times.

\section{Cascading Bloom filter}

Let $T_0$ be the set of $k$-mers (nodes) of the de Brujin graph that we want
to store. The method of \cite{DBLP:conf/wabi/ChikhiR12} stores $T_0$
via a bitmap $B_1$ using a Bloom filter, together with the set $T_1$
of {\em critical false positives}. $T_1$ consists of those $k$-mers
which overlap $k$-mers from $T_0$ by $k-1$ letters (i.e. are
``potential neighbors'' of nodes of $T_0$) but which are stored in
$B_1$ "by mistake", i.e. which belong to $B_1$ but are not in
$T_0$.\footnote{By a slight abuse of language, we say that ``an
  element belongs to $B_j$'' if it is accepted by the corresponding
  Bloom filter.} $B_1$ and $T_1$ are sufficient to represent the graph
provided that the only queried $k$-mers are those which are potential neighbors of $k$-mers of $T_0$.

The idea we introduce in this note is to use this structure recursively and represent the set $T_1$ by a new bitmap $B_2$ and a new set $T_2$, then represent $T_2$ by $B_3$ and $T_3$, and so on.
More formally, starting from $B_1$ and $T_1$ defined as above, we define a series of bitmaps $B_1, B_2, \ldots$ and a series of sets $T_1,T_2,\ldots$
as follows.
$B_2$ will store the set $T_1$ using a Bloom filter, and the set $T_2$ will contain ``true nodes'' from $T_0$ that are stored in $B_2$ ``by mistake'' (we call them {\bf false$^{2}$} positives). $B_3$ and $T_3$, and, generally, $B_i$ and $T_i$ are defined similarly: $B_i$ stores $k$-mers of $T_{i-1}$ using a Bloom filter, and $T_i$ contains $k$-mers stored in $B_i$ "by mistake", i.e. those $k$-mers that do not belong to $T_{i-1}$ but belong to $T_{i-2}$ (we call them {\bf false}$^{i}$ positives). Observe that $T_0\cap T_1=\emptyset$, $T_0 \supseteq T_2 \supseteq T_4 \ldots$ and $T_1 \supseteq T_3 \supseteq T_5 \ldots$.

The following lemma shows that the construction is correct, that is it allows one to verify whether or not a given $k$-mer is belongs to the set $T_0$.

\begin{lemma}
\label{mainlemma}
Given a $k$-mer (node) $K$, consider the smallest $i$ such that $K \not\in B_{i+1}$ (if $K \not\in B_1$, we define $i=0$).
Then, if $i$ is odd, then $K\in T_0$, and if $i$ is even (including $0$), then $K\not\in T_0$.
\end{lemma}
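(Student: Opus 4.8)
The plan is to read the lemma for a queried $k$-mer $K$ that is adjacent to (reachable from) some node of $T_0$, as in the intended usage, and to exploit three basic facts. First, each $B_j$ is a Bloom filter for $T_{j-1}$, so it has no false negatives: $T_{j-1}\subseteq B_j$, equivalently $K\notin B_{j+1}$ implies $K\notin T_j$. Second, for $j\ge 2$ the definition gives $T_j=B_j\cap T_{j-2}$; here the clause ``$K\notin T_{j-1}$'' in the definition is automatic, since $T_{j-2}$ and $T_{j-1}$ have opposite parity and the stated relations $T_0\supseteq T_2\supseteq\cdots$, $T_1\supseteq T_3\supseteq\cdots$ together with $T_0\cap T_1=\emptyset$ force $T_{j-2}\cap T_{j-1}=\emptyset$. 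Third, by the definition of the critical false positives $T_1$, a reachable $K$ with $K\in B_1$ satisfies $K\in T_0$ or $K\in T_1$, and these two cases are mutually exclusive. I would also note at the outset that the cascade is finite (some $T_N$ is empty and $B_{N+1}$ rejects everything), so the smallest $i$ with $K\notin B_{i+1}$ is well defined.

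The heart of the argument is a collapsing step. By the choice of $i$ we have $K\in B_1,\dots,B_i$ and $K\notin B_{i+1}$. For every $j$ with $2\le j\le i$, since $K\in B_j$ the identity $T_j=B_j\cap T_{j-2}$ shows that $K\in T_j$ holds if and only if $K\in T_{j-2}$. Iterating this equivalence along each parity class, I get that membership of $K$ in $T_j$ depends only on the parity of $j$: for even $j\le i$, $K\in T_j$ if and only if $K\in T_0$, and for odd $j\le i$, $K\in T_j$ if and only if $K\in T_1$. On the other hand, the first fact applied at $j=i$ gives $K\notin T_i$.

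It then remains to combine these with the parity of $i$, which I expect to be routine. If $i$ is even and positive, then $K\notin T_i$ and the collapsing step give $K\notin T_0$; the case $i=0$ is immediate, since $K\notin B_1$ forces $K\notin T_0$. If $i$ is odd, then $K\notin T_i$ gives $K\notin T_1$, and because $i\ge 1$ we have $K\in B_1$, so the third fact (exactly one of $K\in T_0$, $K\in T_1$ holds for reachable $K$) yields $K\in T_0$. This matches the claim in both parities. The main subtlety to get right is exactly this last point: the asymmetry between level $1$ and the higher levels. At levels $j\ge 2$ the set $T_j$ captures \emph{all} false positives of $B_j$ inside the finite known set $T_{j-2}$, whereas $T_1$ stores only the \emph{critical} (reachable) false positives of $B_1$; hence the lemma is correct precisely for $k$-mers adjacent to $T_0$, and a non-reachable false positive of $B_1$ would break the ``exactly one'' dichotomy. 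I would therefore state the reachability hypothesis explicitly where it is used.
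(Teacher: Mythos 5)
Your proof is correct and follows essentially the same route as the paper's: both arguments induct along the filter chain using the no-false-negative property of Bloom filters, the recursive definition of the $T_j$, and the reachability (dichotomy) assumption at the $B_1$ level, then conclude by parity. Your packaging via the equivalence $K\in T_j \Leftrightarrow K\in T_{j-2}$ for $K\in B_j$ (justified by $T_j=B_j\cap T_{j-2}$) is a clean restatement of the paper's invariant $K\in T_{i-1}\cup T_i$, and you are right to flag explicitly that the reachability hypothesis is indispensable for the odd case.
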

\begin{proof}
Observe that $K\not\in B_{j+1}$ implies $K\not\in T_j$ by the basic property of Bloom filters. We first check the Lemma for $i=0,1$.

For $i=0$,  we have $K \not\in B_1$, and then $K \not\in T_0$.
%

For $i=1$, we have $K\in B_1$ but $K\not\in B_2$. The latter implies that $K\not\in T_1$, and then $K$ must be a false$^2$ positive, that is $K\in T_0$.
Note that here we use the fact that the only queried $k$-mers $K$ are either nodes of $T_0$ or their neighbors in the graph (see \cite{DBLP:conf/wabi/ChikhiR12}), and therefore if $K\in B_1$ and $K\not\in T_0$ then $K\in T_1$.


For the general case $i \geq 2$, we show by induction that $K\in T_{i-1}$. Indeed, $K\in B_1\cap \ldots\cap B_i$ implies $K\in T_{i-1}\cup T_i$ (which, again, is easily seen by induction),
and $K\not\in B_{i+1}$ implies $K\not\in T_i$.

Since $T_{i-1}\subseteq T_0$ for odd $i$, and $T_{i-1}\subseteq T_1$ for even $i$ (for $T_0\cap T_1=\emptyset$), the lemma follows.
\end{proof}

Naturally, the lemma provides an algorithm to check if a given $k$-mer $K$ belongs to the graph: it suffices to check successively if it belongs to $B_1,B_2,\ldots$ until we encounter the first $B_{i+1}$ which does not contain $K$. Then the answer will simply depend on whether $i$ is even or odd.

In our reasoning so far, we assumed an infinite number of bitmaps $B_i$.
Of course, in practice we cannot store infinitely many (and even simply many) bitmaps. Therefore we ``truncate'' the construction at some step $t$ and store a finite set of bitmaps $B_1, B_2, \ldots,B_t$ together with an explicit representation of $T_t$. The procedure of Lemma~\ref{mainlemma} is extended in the obvious way: if for all $1\leq i \leq t$, $K\in B_i$, then the answer is determined by directly checking $K\in T_t$.

\section{Memory and time usage} \label{sec:memory_time}

First, we estimate the memory needed by our data structure, under the assumption of infinite number of bitmaps. Let $N$ be the number of ``true positives'', i.e. nodes of $T_0$. From properties of Bloom filters, if $T_0$ has to be stored via a bitmap $B_1$ of size $rN$, the false positive rate can be estimated as $c^r$, where $c=0.6185$ (see \cite{Kirsch:2008:LHS:1400123.1400125}). From this, the expected number of critical false positive nodes (set $T_1$) have been estimated in \cite{DBLP:conf/wabi/ChikhiR12} to be $8Nc^r$, as every node has eight extensions, i.e. potential neighbors in the graph. We slightly refine this estimation to $6Nc^r$ by noticing that for almost all nodes, two out of these eight extensions belong to $T_0$ and only six are potential false positives.
Furthermore, to store these $6Nc^r$ critical false positive nodes, we use a bitmap $B_2$ of size $6rNc^r$. Bitmap $B_3$ is used for storing nodes of $T_0$ which are stored in $B_2$ ``by mistake'' (set $T_2$).
We estimate the number of these nodes as the fraction $c^r$ (false positive rate of filter $B_2$) of $N$ (size of $T_0$), that is $Nc^r$.
Similarly, the number of nodes we need to put to $B_4$  is $6Nc^r$
multiplied by $c^r$, i.e. $6Nc^{2r}$. Keeping counting in this way, we
obtain that the memory needed for the whole structure is $rN + 6rNc^r
+ rNc^r+ 6rNc^{2r} + rNc^{2r} + ...$ bits. By dividing this expression
by $N$ to obtain the number of bits per one $k$-mer, we get 
\begin{equation}r + 6rc^r + rc^r + 6rc^{2r} + ... = (r+6rc^r )(1 + c^r + c^{2r}+...)= (1+6c^r) \frac{r}{1-c^r}.\label{formula}\end{equation} A simple calculation shows that the minimum of this expression is achieved when $r = 5.464$, and then the minimum memory used per $k$-mer is $8.45$ bits.

As mentioned earlier, in practice we store only a finite number of bitmaps $B_1,\ldots,B_t$ together with an explicit representation (such as array or hash table) of $T_t$.  
In this case, the memory taken by the bitmaps is a truncated sum $rN + 6rNc^r + rNc^r+ ..$, and a data structure storing $T_t$ takes either $2k \cdot Nc^{\lceil\frac{t}{2}\rceil r}$ or $2k \cdot 6Nc^{\lceil\frac{t}{2}\rceil r}$ bits, depending on whether $t$ is even or odd. The latter follows from the observations that we need to store $Nc^{\lceil\frac{t}{2}\rceil r}$ (or $6rNc^{\lceil\frac{t}{2}\rceil r}$) $k$-mers, each taking $2k$ bits of memory. Consequently, we have to adjust the optimal value of $r$ minimizing the total space, and re-estimate the resulting space spent on one $k$-mer.
In this case, the memory taken by the bitmaps is a truncated sum $rN + 8rNc^r + rNc^r+ ..$, and a data structure storing $T_t$ takes either $2k \cdot Nc^{\lceil\frac{t}{2}\rceil r}$ or $2k \cdot 8Nc^{\lceil\frac{t}{2}\rceil r}$ bits, depending on whether $t$ is even or odd. The latter follows from the observations that we need to store $Nc^{\lceil\frac{t}{2}\rceil r}$ (or $8rNc^{\lceil\frac{t}{2}\rceil r}$) $k$-mers, and every $k$-mer takes $2k$ bits of memory. Consequently, we have to adjust the optimal value of $r$ minimizing the total space, and re-estimate the resulting space spent on one $k$-mer.


Table~\ref{table1} shows estimations for optimal values of $r$ and the corresponding space per $k$-mer for $t=4$ and several values of $k$. The data demonstrates that even such a small value of $t$ leads to considerable memory savings. It appears that the space per $k$-mer is very close to the ``optimal'' space ($8.45$ bits) obtained for the infinite number of filters. Table~\ref{table1} reveals another advantage of our improvement: the number of bits per stored $k$-mer remains almost constant for different values of $k$.

\begin{table}
\begin{center}
\begin{tabular}{|c|c|c|c|}
\hline
$k$ &optimal $r$ & bits per $k$-mer & bits per $k$-mer\\
 & for $t=4$ & for $t=4$ & for $t=1$ (\cite{DBLP:conf/wabi/ChikhiR12}) \\\hline\hline
16 & 5.777 & 8.556& 12.078 \\
\hline
32 & 6.049 & 8.664& 13.518 \\
\hline
64 & 6.399 & 8.824& 14.958 \\
\hline
128 & 6.819 & 9.045& 16.398\\
\hline
\end{tabular}
\end{center}
\caption{1st column: $k$-mer size; 2nd column: optimal value of $r$ for Bloom filters (bitmap size per number of stored elements) for $t=4$; 3rd column: the resulting space per $k$-mer; 4th column: space per $k$-mer for the method of \cite{DBLP:conf/wabi/ChikhiR12} ($t=1$)
}\label{table1}
\end{table}

The third column of Table~\ref{table1} shows the memory usage of the original method of \cite{DBLP:conf/wabi/ChikhiR12}, obtained using the estimation 
$(1.44 \log_2 (\frac{16k}{2.08}) + 2.08)$ from \cite{DBLP:conf/wabi/ChikhiR12}. 
Note that according to the method of \cite{DBLP:conf/wabi/ChikhiR12},
doubling the value of $k$ results in a memory increment by $1.44$
bits, whereas in our method the increment is of $0.11$ to $0.22$
bits. 
%

\bigskip
Let us now estimate preprocessing and query times for our scheme. If the value of $t$ is small (such as $t=4$, as in  Table~\ref{table1}), the preprocessing time grows insignificantly in comparison to the original method of \cite{DBLP:conf/wabi/ChikhiR12}. To construct each $B_i$, we need to store $T_{i-2}$ (possibly on disk, if we want to save on the internal memory used by the algorithm) in order to compute those $k$-mers which are stored in $B_{i-1}$ ``by mistake''.
The preprocessing time increases little in comparison to the original method of \cite{DBLP:conf/wabi/ChikhiR12}, as the size of $B_i$ decreases exponentially and then the time spent to construct the whole structure is linear on the size of $T_0$.


Clearly, applying $t$ Bloom filters instead of a single one introduces
a multiplicative factor of $t$. On the other hand, set $T_t$
is generally much smaller than $T_0$, due to the above-mentioned
exponential decrease. Depending on the data structure for storing
$T_t$, the time saving in querying $T_t$ vs. $T_0$ may even dominate
the time loss in querying multiple Bloom filters. Our experimental
results (Section~\ref{sec:imple} below) confirm that this situation
does indeed occur in practice. Note that even in the case when
querying $T_t$ weakly depends on its size (e.g. when $T_t$ is
implemented by a hash table), the query time will not increase much,
due to a small value of $t$, as discussed earlier. 




\subsection{Using different values of $r$ for different filters} \label{subsec:difr}
In the previous section, we assumed that each of our Bloom filters
uses the same value of $r$, the ratio of bitmap size to the number of
stored $k$-mers. However, formula (\ref{formula}) 
for the number of bits per $k$-mer 
shows a difference for odd and even filter indices. This suggests
that using different parameters $r$ for different filters, rather than
the same for all filters, may reduce the space even further. 
If $r_i$ denotes the corresponding ratio for filter $B_i$, then
(\ref{formula}) should be rewritten to 
\begin{equation}
r_1 + 6r_2c^{r_1} + r_3c^{r_2} + 6r_4c^{r_1+r_3} + ...,
\end{equation}
and the minimum value of this expression becomes $7.913$.

In the same way, we can use different values of $r_i$ in the truncated
case. This leads to a small $2\%$ to $4\%$ improvement in
comparison with case of unique value of $r$. 
Table~\ref{table1} shows results for the case $t=4$ for different values of $k$.

\begin{table}[h]
\begin{center}
\begin{tabular}{|c|c|c|}
\hline
$k$ & bits per $k$-mer & bits per $k$-mer \\
       &  different values of $r$ & single value of $r$ \\\hline\hline
16 & 8.336 &  8.556 \\
\hline
32 & 8.404 & 8.664\\
\hline
64 & 8.512 & 8.824\\
\hline
128 & 8.669 & 9.045\\
\hline
\end{tabular}
\end{center}
\label{table3}
\caption{Estimated memory occupation for the case of different values
  of $r$ vs. single value of $r$, for 4 Bloom filters ($t=4$). For the
  case of single $r$, its value is shown in Table~\ref{table1}. }
\end{table}

\section{Experimental results}
\subsection{Implementation and experimental setup} \label{sec:imple}
We implemented our method using the {\sc Minia} software 
\cite{DBLP:conf/wabi/ChikhiR12} and ran comparative tests for
$2$ and $4$ Bloom filters ($t = 2,4$). 
Note that since the only modified part was the construction step and
the $k$-mer membership queries, this allows us to precisely evaluate
our method against the one of \cite{DBLP:conf/wabi/ChikhiR12}. 

The first step of the implementation is to retrieve the list of $k$-mers
that appear more than $d$ times using
DSK~\cite{dsk} -- a constant memory streaming algorithm to count
$k$-mers.
Each $k$-mer appearing
more than $d$ times (set $T_0$) is inserted into $B_1$. Next, all possible extensions of each
$k$-mer in $T_0$ are queried against $B_1$, and those which return
true are written on the disk. 
Then, this set is traversed and only the $k$-mers absent from $T_0$ are
kept. This results in the set $T_1$ of critical false positives, which is also kept on
disk. Up to this point, the procedure is identical to that of
\cite{DBLP:conf/wabi/ChikhiR12}. 

Next, we insert all $k$-mers from $T_1$ into $B_2$ and to
obtain $T_2$, we check for each $k$-mer in $T_0$ if a query to $B_2$
returns true. This results in the set $T_2$. Thus, 
at this point we have $B_1$, $B_2$ and $T_2$, a complete
representation for $t = 2$. In order to build the data structure for $t = 4$, we continue this
process, by inserting $T_2$ in $B_3$ and retrieving $T_3$ from $T_1$
(stored on disk). It should be noted that to obtain $T_i$
we need $T_{i-2}$, and by always storing it on disk we guarantee
not to use more memory than the size of the final structure. 
The set $T_t$ (that is, $T_1$, 
$T_2$ or $T_4$ in our experiments)
is stored as a sorted array and is searched by a binary search. We
found this implementation more efficient than a hash table. 

Assessing the query time is done through the procedure of graph
traversal, as it is implemented in
\cite{DBLP:conf/wabi/ChikhiR12}. Since the procedure is identical and
independent on the data structure, the time spent on graph traversal
is a faithful estimator of the query time. 

We compare three versions: $t=1$ (i.e. the version of
\cite{DBLP:conf/wabi/ChikhiR12}), $t=2$ and $t=4$. 

\subsection{\emph{E.coli} dataset, varying $k$}
In this set of tests, our main goal was to evaluate the influence of
the $k$-mer size on principal parameters:  the size of
the whole data structure, the size of the set $T_t$, the graph
traversal time, and the time of construction of the data structure. 
We retrieved 10M \emph{E. coli} reads of 100bp
from the \emph{Short Read Archive} (ERX008638) without read pairing
information and 
extracted all $k$-mers occurring at least two times. The total number
of $k$-mer considered varied, depending on the value of $k$, from
6.967.781 ($k=15$) to 5.923.501 ($k = 63$).  We ran each version, 1
Bloom (\cite{DBLP:conf/wabi/ChikhiR12}), 2 Bloom and 4 Bloom, for
values of $k$ ranging from $16$ to $64$. The results are shown in
Fig.~\ref{fig:ecoli}.

The total size of the structures in bits per stored $k$-mer, i.e. the size of
$B_1$ and $T_1$ (respectively, $B_1, B_2$,$T_2$ or $B_1, B_2, B_3,
B_4$,$T_4$)
is shown in
Fig.~\ref{fig:bits_per_kmer}. As expected, the space for 4 Bloom
filters is the smallest for all values of $k$ considered, showing a
considerable improvement, ranging from 32\% to 39\%, over
the version of \cite{DBLP:conf/wabi/ChikhiR12}. Even the version with
just 2 Bloom filters shows
an improvement of at least 20\% over \cite{DBLP:conf/wabi/ChikhiR12}, for all
values of $k$. Regarding the influence of the $k$-mer size on the
structure size, we observe that for 4 Bloom filters the structure size
is almost constant, the minimum value is 8.60 and the largest is 8.89,
an increase of only 3\%. For 1 and 2 Bloom the same pattern is seen: a
plateau from $k = 16$ to $32$, a jump for $k=33$ and another plateau
from $k = 33$ to $64$. The jump at $k=32$ is due to switching from
64-bit to 128-bit representation of $k$-mers in the table $T_t$. 

The traversal times for each version is shown in
Fig.~\ref{fig:traversal}. 
The fastest version is 4 Bloom, showing an improvement over \cite{DBLP:conf/wabi/ChikhiR12} 
of 18\% to 30\%, followed by 2 Bloom. 
This result is surprising and may seem counter-intuitive, as we have
four filters to apply to the queried $k$-mer rather than a single
filter as in \cite{DBLP:conf/wabi/ChikhiR12}. However, the size of
$T_4$ (or even $T_2$) is much smaller than $T_1$, as the size of
$T_i$'s decreases exponentially. As $T_t$ is stored in an array, the time economy in searching $T_4$
(or $T_2$) compared to $T_1$ dominates the time lost on querying
additional Bloom filters, which explains the overall gain in query time. 
%

As far as the construction time is concerned
(Fig.~\ref{fig:construction}), our versions yielded also a faster
construction, with the 4 Bloom version being 5\% to 22\% faster than
that of \cite{DBLP:conf/wabi/ChikhiR12}. The gain is explained by the
time required for sorting the array storing $T_t$, which is much
higher for $T_0$ than for $T_2$ or $T_4$. However, the gain is less significant
here, and, on the other hand, was not observed for bigger datasets (see
Section~\ref{human}).

\begin{figure}[Htbp]
  \center
  \subfigure[]{\includegraphics[width=5.9cm]{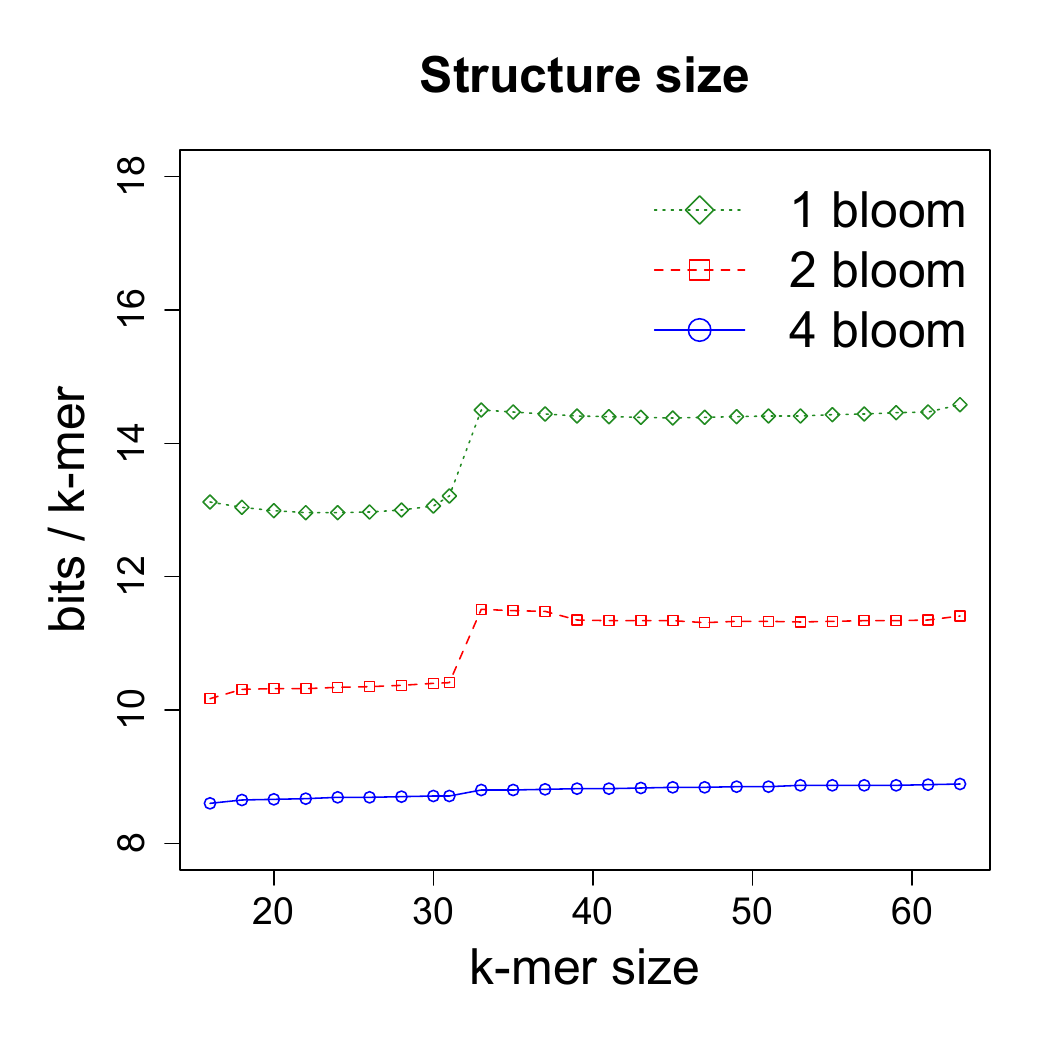}\label{fig:bits_per_kmer}}
  \subfigure[]{\includegraphics[width=5.9cm]{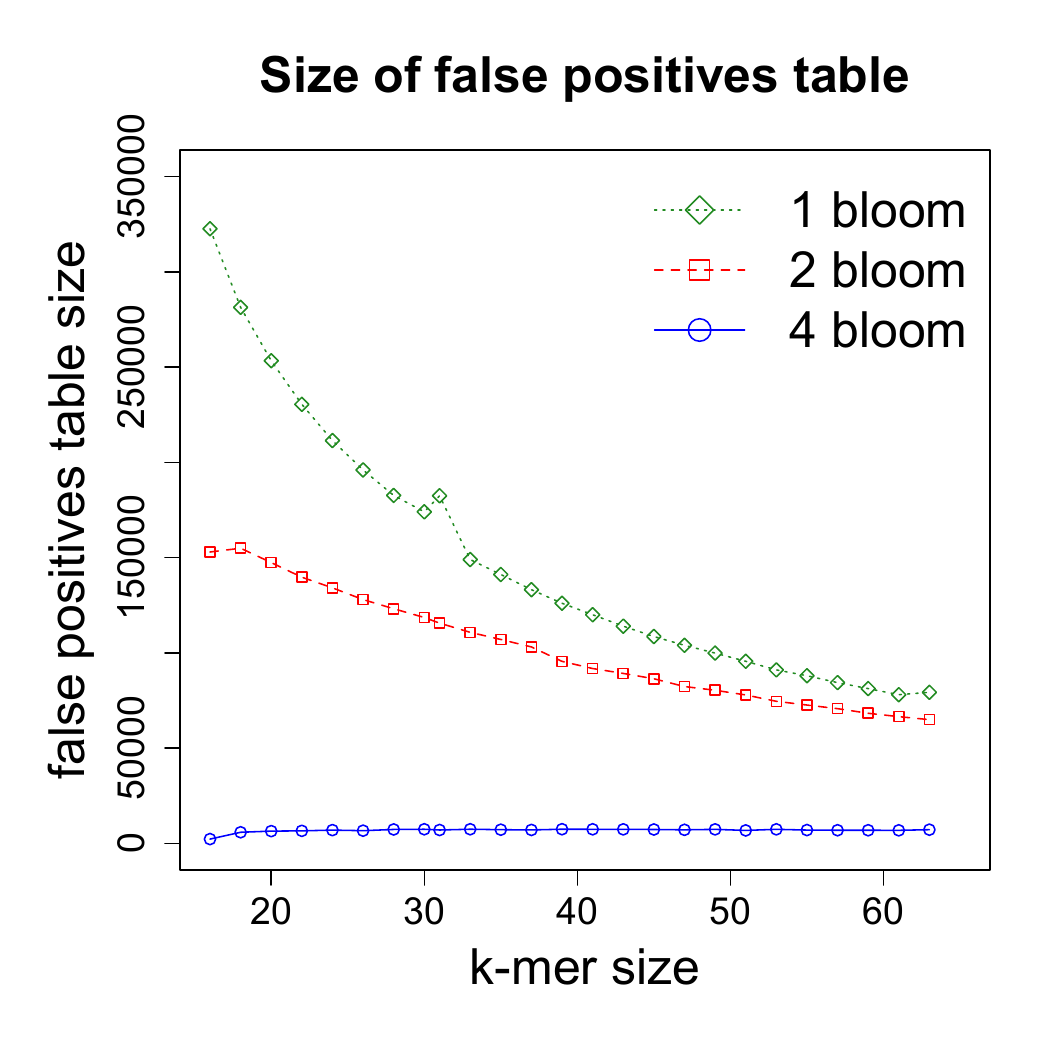}\label{fig:false_positive}}
  \subfigure[]{\includegraphics[width=5.9cm]{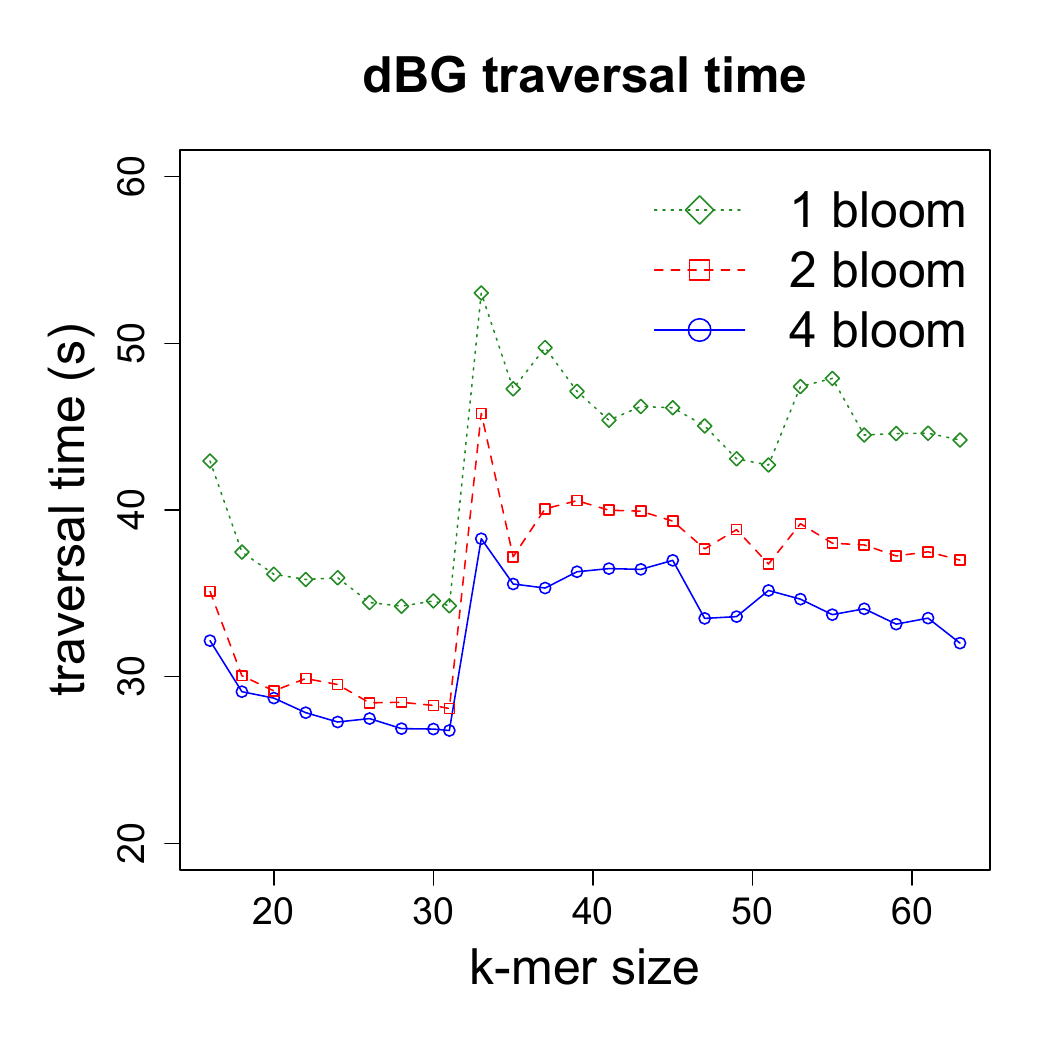} \label{fig:traversal}}
  \subfigure[]{\includegraphics[width=5.9cm]{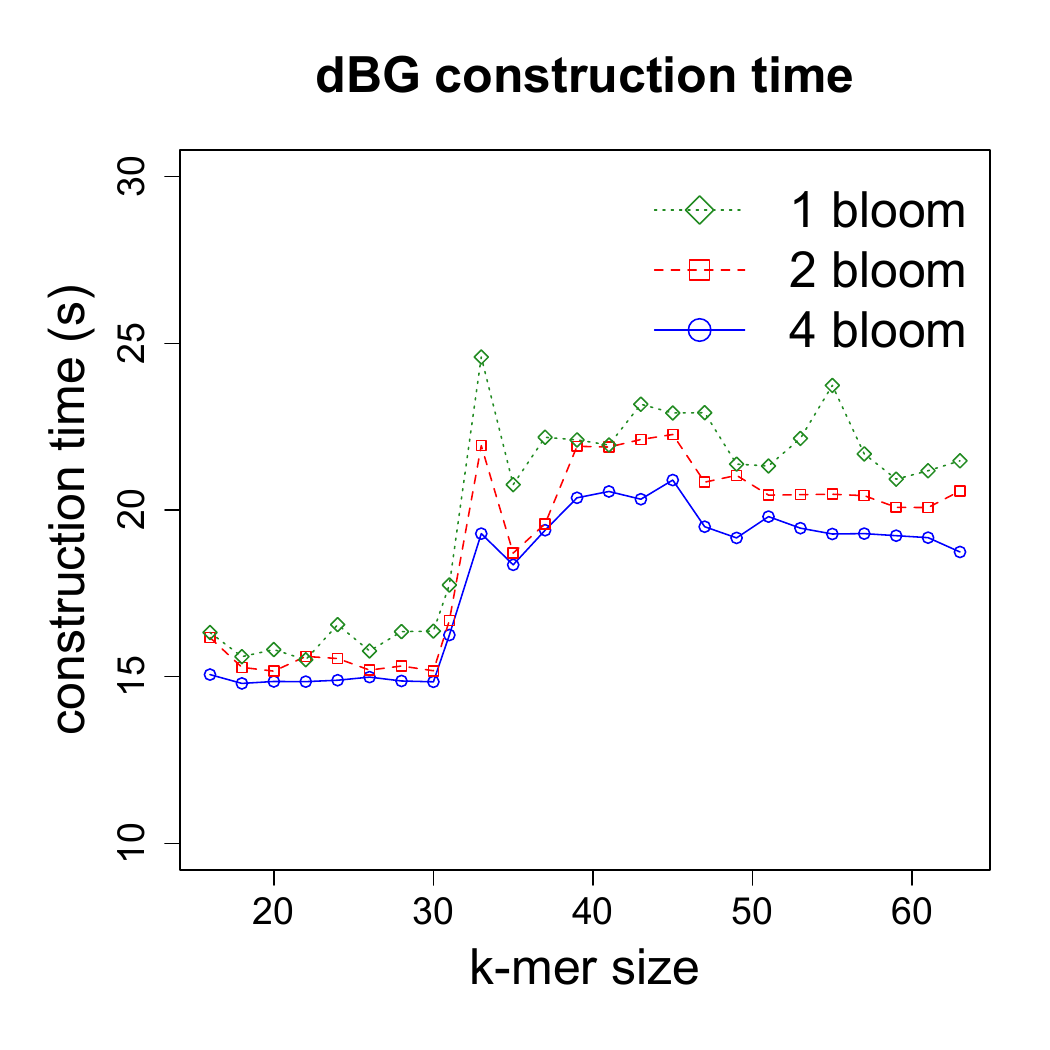}\label{fig:construction}} \\
  \caption{Results for 10M E.coli reads of 100bp using several values
    of $k$. The \emph{1 Bloom} version corresponds to the one
    presented in \cite{DBLP:conf/wabi/ChikhiR12}. (a) Size of the
    structure in bits used per $k$-mer stored. (b)
    Number of false positives stored in $T_1$, $T_2$ or $T_4$ for 1, 2
    or 4 Bloom filters, respectively. (c) De Bruijn graph
    construction time, excluding $k$-mer counting step. (d) De Bruijn
    graph traversal time, including branching $k$-mer indexing. }  \label{fig:ecoli}
\end{figure}

\subsection{\emph{E. coli} dataset, varying coverage} \label{subsec:cov}
From the complete \emph{E. coli} dataset ($\approx$44M reads) from
the previous section, we selected several samples ranging from 5M to 40M
reads in order to assess the impact of the coverage on the size of the
data structures. This strain \emph{E. coli} (K-12 MG1655) is estimated to
have a genome of 4.6M bp~\cite{ecoli}, implying that a sample of 5M
reads (of 100bp) corresponds to $\approx$100X coverage. We
set $d = 3$ and $k = 27$. The results are shown in
Fig.~\ref{fig:ecoli_coverage}. As expected, the 
memory consumption per $k$-mer
remains almost constant for increasing 
coverage, with a slight decrease for 2 and 4 Bloom. The best
results are obtained with the 4 Bloom version, an improvement of 33\%
over the 1 Bloom version of \cite{DBLP:conf/wabi/ChikhiR12}. On the other hand, the number of distinct $k$-mers 
increases markedly (around 10\% for each 5M reads) with increasing coverage, see Fig.~\ref{fig:ecoli_cov:kmers}. 
This is due to
sequencing errors: an increase in coverage implies more errors with
higher coverage, which are not removed by our cutoff
$d = 3$. This suggests that the value of $d$ should be chosen
according to the coverage of the sample. Moreover, in the case where
read qualities are available, a quality control pre-processing step may
help to reduce the number of sequencing errors.

\begin{figure}[Htbp]
  \center 
  \subfigure[]{\includegraphics[width=5.9cm]{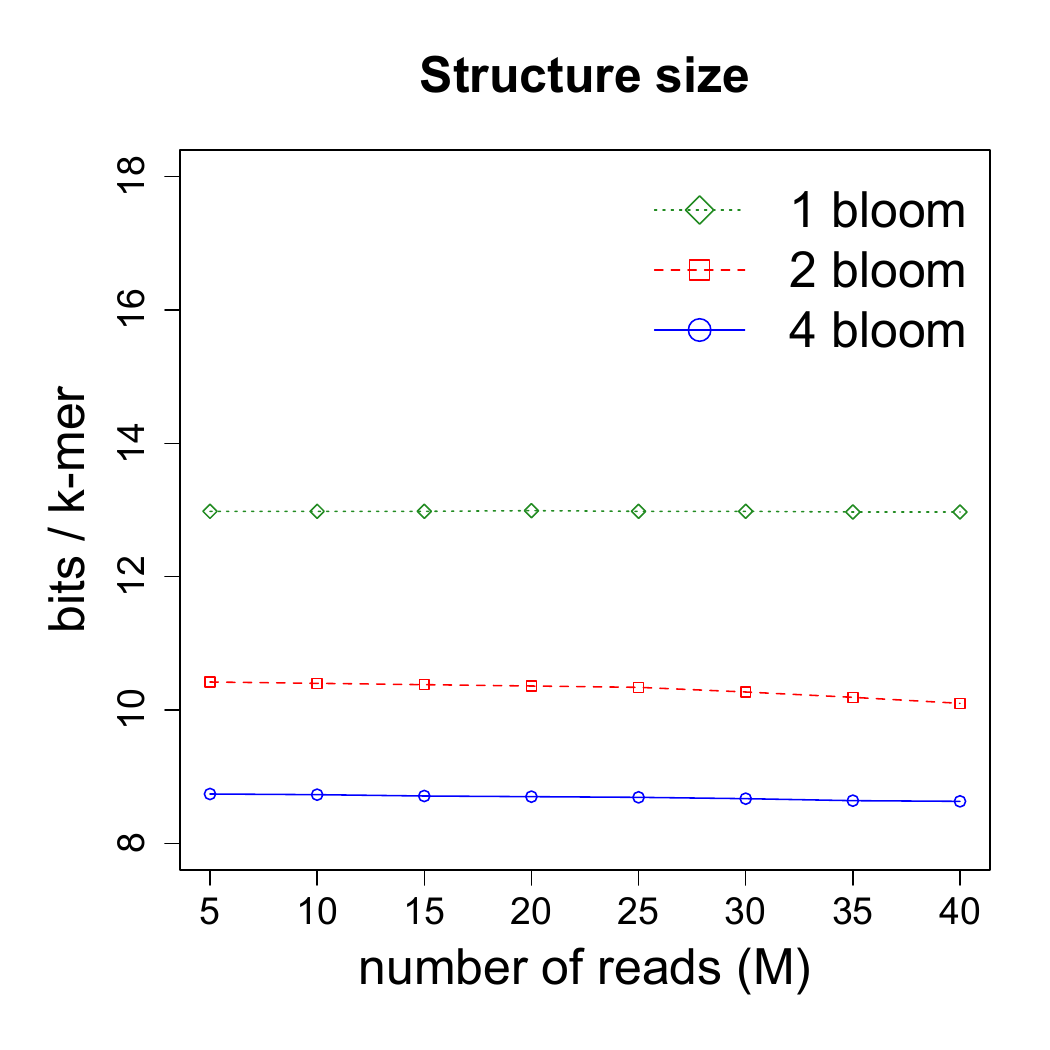}}
  \subfigure[]{\includegraphics[width=5.9cm]{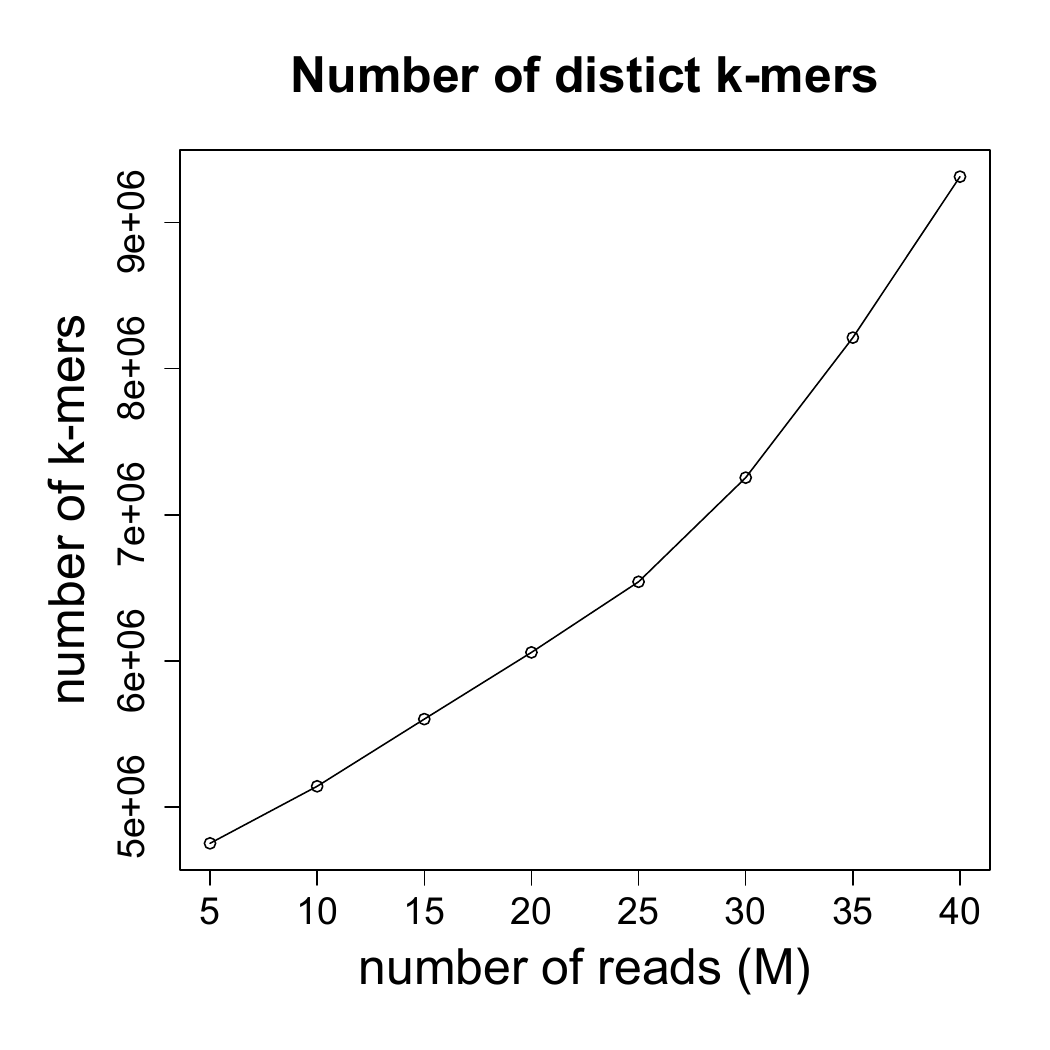} \label{fig:ecoli_cov:kmers}}
  \caption{Results for \emph{E.coli} reads of 100bp using $k =
    27$. The \emph{1 Bloom} version corresponds to the one presented
    in \cite{DBLP:conf/wabi/ChikhiR12}. (a) Size of the structure in
    bits used per $k$-mer stored. (b) Number of distinct $k$-mers.}\label{fig:ecoli_coverage}
\end{figure}

\subsection{Human dataset}
\label{human}
We also compared 2 and 4 Bloom versions with the 1 Bloom version of
\cite{DBLP:conf/wabi/ChikhiR12} on a large dataset. For that, we
retrieved 564M Human reads of 100bp (SRA: SRX016231) without pairing
information and
discarded the reads occurring less than 3 times.  The dataset
corresponds to $\approx$17X coverage. A total of 2.455.753.508
$k$-mers were indexed. We ran each version, 1 Bloom
(\cite{DBLP:conf/wabi/ChikhiR12}), 2 Bloom and 4 Bloom with $k =
23$. The results are shown in Table~\ref{tab:human}.

The results are in general consistent with the previous tests on
\emph{E.coli} datasets. There is an improvement of 34\% (21\%) for the
4 Bloom (2 Bloom) in the size of the structure. The graph traversal is
also 26\% faster in the 4 Bloom version. However, in contrast to the
previous results, the graph construction time {increased} by 10\% and
7\% for 4 and 2 Bloom versions respectively, when compared to the 1
Bloom version. This is due to the fact that disk writing/reading
operations now dominate the time for the graph construction, and 2 and
4 Bloom versions generate more disk accesses than 1 Bloom.  As stated
in Section~\ref{sec:imple}, when constructing the 1 Bloom structure,
the only part written on the disk is $T_1$ and it is read only once to
fill an array in memory. For 4 Bloom, $T_1$ and $T_2$ are written to
the disk, and $T_0$ and $T_1$ are read at least one time each to build
$B_2$ and $B_3$. Moreover, since the size coefficient of $B_1$
reduces, from $r = 11.10$ in 1 Bloom to $r = 5.97$ in 4 Bloom, the
number of false positives in $T_1$ increases.

\begin{table}[htbp]
\begin{center}
\begin{tabular}{|c|c|c|c|}
\hline
Method                          & 1 Bloom & 2 Bloom & 4 Bloom \\
\hline \hline
Construction time (s)           &  40160.7 & 43362.8 & 44300.7 \\
\hline
Traversal time (s)              &  46596.5 & 35909.3 & 34177.2 \\
\hline
$r$ coefficient                      &  11.10 &  7.80    & 5.97 \\
\hline
\multirow{4}{*}{Bloom filters size (MB)}  &  $B_1 = 3250.95$ & $B_1 = 2283.64$ & $B_1 = 1749.04$  \\
                                          &                  & $B_2 = 323.08$  & $B_2 = 591.57$ \\
                                          &                  &                 & $B_3 = 100.56$  \\
                                          &                  &                 & $B_4 = 34.01$ \\
\hline
False positive table size (MB)  &  $T_1 = 545.94$ & $T_2 = 425.74$ & $T_4 = 36.62$ \\
\hline
Total size (MB)                 &  3796.89 & 3032.46 & 2511.8 \\
\hline
\bf Size (bits/$k$-mer)             & \bf 12.96 & \bf 10.35 & {\bf 8.58} \\
\hline
\end{tabular}
\end{center}
\caption{Results of 1, 2 and 4 Bloom filters version for 564M Human
  reads of 100bp using $k = 23$. The \emph{1 Bloom} version
  corresponds to the one presented in
  \cite{DBLP:conf/wabi/ChikhiR12}. }\label{tab:human}
\end{table}

\section{Discussion and Conclusions}
Using cascading Bloom filters for storing de Bruijn graphs brings a
clear advantage over the single-filter method of
\cite{DBLP:conf/wabi/ChikhiR12}. In terms of memory consumption, which
is the main parameter here, we obtained an improvement of around
30\%-40\% in all our experiments. Our data structure takes 8.5 to 9 bits
per stored $k$-mer, compared to 13 to 15 bits by the method of
\cite{DBLP:conf/wabi/ChikhiR12}. 
This confirms our analytical estimations. The above results were obtained using only
four filters and are very close to the estimated optimum (around 8.4
bits/$k$-mer) produced by the infinite number of filters. An
interesting characteristic of our method is that the memory grows insignificantly
with the growth of $k$, even slower than with the method of \cite{DBLP:conf/wabi/ChikhiR12}. 
Somewhat surprisingly, we also obtained a significant
decrease, of order 20\%-30\%, of query time. The construction time of
the data structure varied from being 10\% slower (for the human
dataset) to 22\% faster (for the bacterial dataset). 

As stated previously, another compact encoding of de Bruijn graphs has
been proposed in \cite{DBLP:conf/wabi/BoweOSS12}, however no
implementation of the method was made available. For this reason, we
could not experimentally compare our method with the one of
\cite{DBLP:conf/wabi/BoweOSS12}. We remark, however, that the space
bound of \cite{DBLP:conf/wabi/BoweOSS12} heavily depends on the number
of reads (i.e. coverage), while in our case, the data structure size
is almost invariant with respect to the coverage
(Section~\ref{subsec:cov}).

An interesting prospect for further possible improvements of our
method is offered by work \cite{DBLP:conf/csr/Porat09}, where an
efficient replacement to Bloom filter was introduced. The results of
\cite{DBLP:conf/csr/Porat09} suggest that we could hope to reduce the
memory to about $5$ bits per $k$-mer. However, 
there exist obstacles on this way: an implementation of such a
structure would probably result in a significant construction and query
time increase. This issue remains, however, to be more carefully studied. 


\paragraph{Acknowledgements} Part of this work has been done during
the visit of KS to LIGM in France, supported by the CNRS
French-Russian exchange program in Computer Science. GK has been
partly supported by the ABS2NGS grant of the French gouvernement
(program {\em Investissement d'Avenir}) as well as by a Marie-Curie
Intra-European Fellowship for Carrier Development. GS was supported by
the ERC Advanced Grant Sisyphe held by Marie-France Sagot.


\end{document}